\newtheorem{theorem}{Theorem}
\newtheorem{proposition}[theorem]{Proposition}
\newtheorem{lemma}[theorem]{Lemma}
\newtheorem{definition}[theorem]{Definition}
\newcommand{\vbl}{\ensuremath{{\rm vbl}}}
\newcommand{\poly}{\ensuremath{{\textup{\rm poly}}}}
\newcommand{\cost}{\ensuremath{{\textup{\rm cost}}}}
\newcommand{\Neg}{\ensuremath{{\textup{\rm Neg}}}}
\newcommand{\ignore}[1]{}
\newcommand{\UniquekSAT}[1]{\ensuremath{\textsc{Unique-#1-SAT}}}
\newcommand{\BkS}[1]{\ensuremath{\textsc{Ball-}#1\textsc{-SAT}}}
\newcommand{\BdkCSP}[2]{\ensuremath{\textsc{Ball-}(#1,#2)\textsc{-CSP}}}
\newcommand{\DBS}{\ensuremath{\textsc{Double-Ball-SAT}}}
\newcommand{\dbs}{\texttt{\textup{double-ball-search}}}
\title{Using Constraint Satisfaction To Improve Deterministic $3$-SAT} 
\author{
  Konstantin Kutzkov \\
  IT University of Copenhagen, Denmark\\
  \texttt{kutzkov@googlemail.com}
  \and
Dominik Scheder\\
  ETH Z\"urich\\
  \texttt{dscheder@inf.ethz.ch} 
}
\begin{document}

\maketitle

\begin{abstract}
  We show how one can use certain deterministic algorithms for
  higher-value constraint satisfaction problems (CSPs) to
  speed up deterministic local search for $3$-SAT. This
  way, we improve the deterministic worst-case running
  time for $3$-SAT to $O(1.439^n)$. 
\end{abstract}

\section{Introduction}

Among NP-complete problems, boolean satisfiability, short SAT, is
perhaps the most intensively studied, rivaled only by graph
colorability. Within SAT, the case of $k$-SAT, in which the input is
restricted to formulas containing only clauses of size at most $k$, has
drawn most attention, in particular the case $k=3$. The currently best
algorithms for $3$-SAT are based on two rather different ideas.  The
first is local search. In~\cite{Schoening99}, Sch\"oning gives an
extremely simple randomized local search algorithm, and its running
time of $O^*( (4/3)^n)$ is still close to the best known running times
(here, $n$ is the number of variables, and we use the notation $O^*$
to suppress factors that are polynomial in $n$).  The
second idea is to process the variables of $F$ in random order,
assigning them $0$ or $1$ randomly, unless one choice is ``obviously
wrong''. This was introduced by Paturi, Pudl{\'a}k, and
Zane~\cite{ppz}, achieving a running time of $O^*(1.59^n)$. By using a
less obvious notion of ``obviously wrong'' (and a much more
complicated analysis), Paturi, Pudl{\'a}k, Saks, and Zane~\cite{ppsz}
significantly improved this, however not beating Sch\"oning's bound of
$(4/3)^n$. Iwama and Tamaki combined these two approaches to obtain a
running time of $O^*(1.3238^n)$.  Rolf~\cite{rolf05} improved the
analysis of that algorithm and showed that its running time is
$O^*(1.32216^n)$, the currently best bound.\\

Deterministic algorithms for $3$-SAT do not achieve these running
times. The currently best deterministic algorithms can all be seen as
attempts to derandomize Sch\"oning's local search algorithm.  The
first attempt is by Dantsin et al.~\cite{dantsin} and is based on a
simple recursive local search algorithm, combined with a construction
of covering codes. This achieves a running time of
$O^*(1.5^n)$.Dantsin et al. also show how the recursive algorithm can
be improved to achieve an overall running time of $O^*(1.481^n)$.
Subsequent papers (Brueggeman and Kern~\cite{BK04},
Scheder~\cite{Scheder08}) improve the local search algorithm by giving
more sophisticated branchings.  This paper also improves the running
time by improving the recursive local search algorithm, but is still
qualitatively different from previous work: We show that, under
certain circumstances, one can translate those $3$-CNF formulas that
constitute the worst-case for previous algorithms into a constraint
satisfaction problem (CSP) with more than two values (boolean
satisfiability problems are CSPs with two values), which can be solved
quite efficiently.
\ignore{
The hunt for the lowest base is not an idle exercise.  We want to give
three reasons for this belief. First, there is currently no
``obvious'' candidate for $s_k$. Think of Sch\"oning's
algorithm~\cite{Schoening99}. Its running time is $O^*((4/3)^n)$, and
$4/3$ is a beautiful number. If the ten years that have passed since
its publication had not seen any improvement over $4/3$, one would
guess some researchers would have conjectured $s_3$ to be $4/3$.
However, Hofmeister, Sch\"oning, Schuler and Watanabe~\cite{HSSW02}
gave the first running time below $(4/3)^n$, namely $O^*(1.3302^n)$.
This improvement is rather small, but still very valuable because it
shows that $4/3$ is not the last word. Daniel Rolf~\cite{rolf05} gives
the currently best randomized running time for $3$-SAT, namely
$O^*(1.32217^n)$. This is clearly not a beautiful number. Therefore,
further work is necessary if we want to have a good candidate for $s_3$.\\

Second, currently randomized algorithms for $3$-SAT are ahead of
deterministic algorithms by a wide margin. Let us denote by $d_3$ the
deterministic analog of $s_3$. We know that $s_3 \leq 1.32217$ and
$d_3 \leq 1.439$. Can one close that gap? Daniel
Rolf~\cite{rolf-unique} has shown that if the input is restricted to
$3$-CNF formulas with at most one satisfying assignment (sometimes
called $\UniquekSAT{3}$), one can derandomize the best known
randomized algorithm. We hope that with further effort one can
implement local search deterministically and achieve
Sch\"oning's running time.\\

Third, the algorithm of Paturi, Pudl\'{a}k, Saks, and Zane~\cite{ppsz}
runs faster on formulas with exactly one satisfying assignment than on
formulas with multiple ones (in fact, this is not precisely true: the
{\em proved} running time is better; no one knows about the actual
running time). Is $\UniquekSAT{3}$ {\em inherently} easier than
general $3$-SAT? Although Calabro, Impagliazzo, Kabanets, and
Paturi~\cite{CIKP-Unique} have shown that for large $k$,
$\UniquekSAT{k}$ can be at most marginally easier than general
$k$-SAT, this does not say anything about $3$-SAT. 
}
\subsection*{Notation}

A CNF formula $F$ is a conjunction (AND) of clauses. A clause is a
disjunction (OR) of literals, and a literal is either a boolean
variable $x$ or its negation $\bar{x}$. A $k$-clause is a clause with
exactly $k$ distinct literals. For example, $(x \vee \bar{y} \vee
\bar{z})$ is a typical $3$-clause. A $k$-CNF formula is a CNF formula
consisting of $k$-clauses. An {\em assignment} is a mapping of
variables to truth values. We use the numbers $0$ and $1$ to denote
$\texttt{false}$ and $\texttt{true}$. If $V$ is a set of $n$
variables, an assignment to $V$ can be seen as a bit string of length
$n$, i.e., an element of $\{0,1\}^n$. For two such assignments
$\alpha, \beta$, we denote by $d_H(\alpha,\beta)$ their {\em Hamming
  distance}, i.e., the number of variables on which they disagree.

\subsection{Randomized Local Search}
We will briefly describe Sch\"oning's local search
algorithm~\cite{Schoening99}. Let $F$ be a $3$-CNF formula and let
$\alpha$ be some truth assignment to the variables of $F$.  If
$\alpha$ does not satisfy $F$, the algorithm arbitrarily picks a
clause $C$ that is unsatisfied by $\alpha$. For simplicity, suppose $C
= (\bar{x} \vee \bar{y} \vee \bar{z})$. The algorithm uniformly at
random picks a variable from $C$ and flips the value $\alpha$ assigns
to it. This is repeated $3n$ times. If the algorithm did not find a
satisfying assignment within $3n$ steps, it gives up.
Sch\"oning~\cite{Schoening99} proved the following lemma.
\begin{lemma}[Sch\"oning~\cite{Schoening99}]
  Suppoe $F$ is a satisfiable $3$-CNF formula and $\alpha^*$ is a
  satisfying assignment. Let $\alpha$ be a truth assignment and set $r
  := d_H(\alpha,\alpha^*)$. Then the above algorithm finds a
  satisfying assignment with probability at least $(1/2)^r$.
\end{lemma}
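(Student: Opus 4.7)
The plan is to view the execution of the algorithm as a biased random walk on the Hamming distance to $\alpha^*$, and then to lower-bound its probability of reaching $0$ by an explicit combinatorial count of favorable paths. Let $\alpha_0 = \alpha, \alpha_1, \alpha_2, \dots$ be the sequence of assignments produced by the algorithm, and set $X_t := d_H(\alpha_t, \alpha^*)$. The key one-step observation is that whenever $\alpha_t$ does not satisfy $F$, the clause $C$ picked by the algorithm is unsatisfied by $\alpha_t$ but satisfied by $\alpha^*$; hence at least one of the three variables occurring in $C$ is one on which $\alpha_t$ and $\alpha^*$ disagree, and flipping it decreases the Hamming distance by $1$. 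Since the algorithm flips a uniformly random variable of $C$, this gives $\Pr[X_{t+1} = X_t - 1 \mid X_t > 0] \geq 1/3$ and $\Pr[X_{t+1} = X_t + 1 \mid X_t > 0] \leq 2/3$.

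Next I would lower-bound the probability that the walk hits $0$ within $3n \geq 3r$ steps by exhibiting an explicit family of winning trajectories. A trajectory with exactly $r+j$ downward steps and $j$ upward steps has length $r+2j$ and, provided it stays nonnegative, ends at $0$; a standard ballot/reflection-type count shows that there are at least $\frac{1}{r+2j}\binom{r+2j}{j}$ such sequences, and each occurs with probability at least $(1/3)^{r+j}(2/3)^j$. Choosing $j = r$ (so the length is $3r \leq 3n$) and applying Stirling to $\binom{3r}{r}$, the single corresponding term contributes at least $2^{-r}/\poly(r)$. Consistent with the paper's $O^*$ convention of suppressing polynomial factors, this yields the stated bound of $(1/2)^r$.

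The main technical step is the Stirling estimate on the binomial coefficient; everything before it is essentially the trivial observation about unsatisfied clauses always containing a disagreement variable. A minor subtlety to check is that the algorithm may terminate early upon finding some other satisfying assignment, but this only increases the success probability, so restricting attention to trajectories that walk straight toward $\alpha^*$ remains a valid lower bound on the overall success probability.
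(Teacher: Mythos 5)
The paper does not actually prove this lemma---it cites Sch\"oning and uses it as a black box---so the relevant comparison is with Sch\"oning's own argument, which is indeed the biased-random-walk analysis you set up. Your plan (reduce to a walk on Hamming distance with downward drift $\geq 1/3$, count favourable trajectories via a ballot-type bound, balance with Stirling at $j=r$, and absorb the lost $\poly(n)$ factor into the $O^*$ convention) is the right one and is essentially the standard proof; the $j=r$ computation $\frac{1}{3r}\binom{3r}{r}(1/3)^{2r}(2/3)^{r}=\Theta(2^{-r}/\poly(r))$ is correct, and you are also right that the lemma as quoted silently suppresses a polynomial factor that is present in Sch\"oning's and Dantsin et al.'s statements.

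There is, however, a genuine gap at the sentence ``each occurs with probability at least $(1/3)^{r+j}(2/3)^{j}$.'' Your one-step observation gives $\Pr[X_{t+1}=X_t-1\mid\text{history}]\ge 1/3$ and $\Pr[X_{t+1}=X_t+1\mid\text{history}]\le 2/3$. This lower-bounds the probability of a prescribed \emph{down}-step, but it does \emph{not} lower-bound the probability of a prescribed \emph{up}-step: the up-probability can lie anywhere in $[0,2/3]$, and it is exactly $0$ whenever all three variables of the chosen clause disagree with $\alpha^*$. Hence a fixed trajectory containing an up-step may have probability $0$, and summing your formula over the family of trajectories does not produce a valid lower bound on the success probability. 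The standard repair is a stochastic domination (coupling) step: construct, on the same probability space, a walk $Y_t$ started at $Y_0=r$ that moves down by $1$ with probability exactly $1/3$ and up by $1$ with probability exactly $2/3$ at every step, coupled so that $Y_t \ge X_t$ for all $t$ (this is possible precisely because $X$'s down-probability is always $\ge 1/3$: whenever $Y$ moves down, force $X$ to move down as well). Then $\Pr[X\text{ hits }0\text{ within }3n\text{ steps}]\ge\Pr[Y\text{ hits }0\text{ within }3n\text{ steps}]$, and your ballot count and Stirling estimate apply verbatim to $Y$, whose trajectories do have the exact probabilities you wrote. With that step inserted the argument is complete.
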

Sch\"oning now picks the initial assignment $\alpha$ uniformly at
random from all truth assignments and then starts the local search we
described above. For a fixed satisfying assignment $\alpha^*$, we 
observe that 
$$
\Pr[d_H(\alpha, \alpha^*) = r] = \frac{{n \choose r}}{2^n} \ ,
$$
and therefore overall success probability is at least
$$
\sum_{r=0}^n \frac{{n \choose r}}{2^n} \left(\frac{1}{2}\right)^r = 
  \left(\frac{3}{4}\right)^n \ . 
$$
By repetition, this yields a Monte Carlo algorithm of running time
$O^*( (4/3)^n)$.  We see that Sch\"oning's algorithm uses
randomness in two ways: First to choose the initial assignment, and
second to steer the local search. It turns out that one can derandomize
the first random choice at almost no cost (the running time grows by a
polynomial factor in $n$). Derandomizing local search itself, however,
is much more difficult, and all currently known versions yield a
running time that is exponentially worse than Sch\"oning's randomized
running time.

\subsection{Deterministic Local Search}

We sketch the result Dantsin et al.~\cite{dantsin}, who were the first
to give a deterministic algorithm based on local search, which runs in
time $O^*(1.5^n)$. For that, consider the parametrized problem
$\BkS{3}$.
\begin{quotation}
  \textbf{$\BkS{3}$:} Given a $3$-CNF formula $F$ over $n$ variables,
  a truth assignment $\alpha$ to these variables, and a natural number
  $r$. Is there an assignment $\alpha^*$ satisfying $F$ such that
  $d_H(\alpha,\alpha^*) \leq r$?
\end{quotation}
We call this problem $\BkS{3}$ because it asks whether the Hamming
ball of radius $r$ around $\alpha$, denoted by $B_r(\alpha)$, contains
a satisfying assignment. The merits of~\cite{dantsin} are twofold.
First, they give a simple recursive deterministic algorithm solving
$\BkS{3}$ in time $O^*(3^r)$; If $\alpha$ does not satisfy $F$, pick
an unsatisfied clause $C$. There are $|C|\leq 3$ ways to locally
change $\alpha$ as to satisfy $C$. Recurse on each. One can regard
this recursive algorithm as a derandomization of Sch\"oning's local
search algorithm. It comes at a cost, however: Its running time is
$O^*(3^r)$, whereas Sch\"oning's local search has a success
probability of $(1/2)^r$. Therefore, a ``perfect'' derandomization
should have running time $O^*(2^r)$. Second, they show that an
algorithm $\mathcal{A}$ solving $\BkS{3}$ in time $O^*(a^r)$ yields an
algorithm $\mathcal{B}$ solving $3$-SAT in time
\begin{eqnarray}
O^*\left(\left(\frac{2a}{a+1}\right)^n\right) \ ,
\label{ball-to-sat}
\end{eqnarray}
and furthermore, $\mathcal{B}$ is deterministic if $\mathcal{A}$ is.
This works by covering the set of satisfying assignments with Hamming
balls of radius $r$ and solving $\BkS{3}$ for each ball.  Formally,
one constructs a {\em covering code} $\mathcal{C}$ of radius $r$,
which is a set $\mathcal{C} \subseteq \{0,1\}^n$ such that
$$
\bigcup_{\alpha \in \mathcal{C}} B_r(\alpha) = \{0,1\}^n \ 
$$
and then solves $\BkS{3}$ for each $\alpha \in \mathcal{C}$.\\

\ignore{ Sch\"oning's local search algorithm can be seen as a
  randomized algorithm solving $\BkS{3}$ in time $O^*(2^r)$ in the
  following sense: If $\alpha$ is a truth assignment, and there is
  some satisfying assignment $\alpha^*$ of $F$ with
  $d_H(\alpha,\alpha^*) \leq r$, then Sch\"oning's random walk
  algorithm, started on assignment $\alpha$, returns a satisfying
  assignment with probability at least $(1/2)^r$.  By repetition, this
  can be turned into a Monte Carlo algorithm with expected running
  time $O^*(2^r)$, which is much faster than the $O^*(3^r)$ by Dantsin
  et al.  The catch is that the satisfying assignment $\beta$ returned
  by Sch\"oning's local search may have $d_H(\alpha,\beta) > r$. In
  this sense, Sch\"oning's local search algorithm does not solve the
  decision problem $\BkS{3}$, but rather
  a promise version of it.\\
}
Can one solve $\BkS{3}$ {\em deterministically} in time $O^*(2^r)$?
Nobody has achieved that yet, although a lot of progress has been
made.  By devising clever branching rules (and proving non-trivial
lemmas), one can reduce the running time to $O^*(a^r)$ for $a < 3$.
Dantsin et al.  already reduce it to $O^*(2.848^r)$, Brueggeman and
Kern~\cite{BK04} to $O^*(2.792^r)$, and Scheder~\cite{Scheder08} to
$O^*(2.733^r)$. The approach which we present here is different.
Instead of designing new branching rules, we transform worst-case
instances of $\BkS{3}$ into $(3,3)$-CSP formulas over $r$ variables,
which one can solve more efficiently.  Of course, in reality several
subtleties arise, and the algorithm becomes somewhat technical.
Still, we achieve a substantially better running time:

\begin{theorem}
  $\BkS{3}$ can be solved deterministically in time $O^*(a^r)$, where
  $a \approx 2.562$ is the largest root of $x^2 - x - 4$.
  Together with (\ref{ball-to-sat}), this gives 
  a deterministic algorithm solving $3$-SAT in time
  $O^*\left(1.439^n\right)$.
\label{main-theorem}
\end{theorem}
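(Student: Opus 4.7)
My plan is to design a deterministic recursive branching algorithm for $\BkS{3}$ whose worst-case recurrence beats the naive $O^*(3^r)$ bound of Dantsin et al.\ by delegating the hardest configurations to a deterministic $(3,3)$-CSP solver. Given an input $(F,\alpha,r)$, the outer algorithm selects an unsatisfied clause $C$ and branches on which literal of $C$ to flip. In easy cases---when $C$ has size at most two, or when flipping one literal of $C$ creates a short unsatisfied clause elsewhere---one obtains a strictly better recurrence, for instance $T(r)\leq T(r-1)+T(r-2)$ whenever $|C|\leq 2$, so these cases will not dominate the running time.

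The crux is the residual worst case: the unsatisfied clauses are genuine $3$-clauses that are locally independent, so none of the standard branching rules break the $3$-branching barrier. In that situation I would encode the instance as a $(3,3)$-CSP on at most $r$ variables. The idea is that the variables of $F$ that may still be flipped become CSP variables with a three-valued domain (``unflipped'' or one of two possible ``flipped'' roles coming from the unsatisfied clauses they belong to), and each $3$-clause of $F$ becomes a CSP constraint of arity at most three. A satisfying assignment of the CSP then corresponds to a ball-satisfying assignment of $F$, and vice versa. Invoking a deterministic $(3,3)$-CSP subroutine running in time $O^*(c^r)$ for a suitable base $c$ handles this case.

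Combining the cheap branching cases with the CSP subroutine should produce a recurrence whose characteristic polynomial is $x^2-x-4$, with largest root $a=(1+\sqrt{17})/2\approx 2.562$. Intuitively the balanced recurrence $T(r)\leq T(r-1)+4T(r-2)$ arises because a single expensive branch handles a favourable irregularity while the CSP solver absorbs the remaining budget at the same amortised cost. Substituting $a\approx 2.562$ into the transfer formula~(\ref{ball-to-sat}) gives $2a/(a+1)\approx 1.439$, producing the deterministic $O^*(1.439^n)$ bound for $3$-SAT.

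The main obstacle will be the reduction itself: one must ensure that after the cheap branching rules are exhausted, the remaining instance really does have the rigid structure needed for a clean encoding into a $(3,3)$-CSP on at most $r$ variables (rather than $n$), whose solutions are in bijection with the ball-satisfying assignments. A secondary technical difficulty is keeping the CSP arity bounded by three: if a variable of $F$ participates in many unsatisfied $3$-clauses the corresponding CSP variable is overconstrained, so one needs either degree-reducing preprocessing or a branching rule that guarantees bounded local degree before the CSP step is invoked.
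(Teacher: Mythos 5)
Your high-level strategy---use simple branching on the easy local configurations and a deterministic $(3,3)$-CSP solver to smash through the genuinely hard case---is indeed the paper's central idea. But the way you propose to assemble the pieces has two substantive gaps.

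First, you have misidentified the source of the recurrence $x^2-x-4$. You claim it arises from combining one expensive branch with the CSP subroutine. In the actual argument this recurrence is entirely a branching recurrence: it appears in the \emph{pre-processing} stage when two negative $3$-clauses, say $(\bar x\vee\bar y\vee\bar z)$ and $(\bar x\vee\bar u\vee\bar v)$, intersect in exactly one literal. There is one cheap branch ($x\mapsto 0$, cost $1$) and four expensive branches (one of $\{y,z\}$ together with one of $\{u,v\}$ to $0$, cost $2$), giving $L(r)\leq L(r-1)+4L(r-2)$. The CSP subroutine is invoked only after all negative clauses have been made pairwise disjoint, and in that regime the running time is actually \emph{better} than $a^r$ with $a\approx 2.562$ (around $2.533^r$). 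So the bottleneck is the intersecting-clause branching, not the CSP step, and the theorem's base $a$ reflects that.

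Second, and more seriously, your CSP encoding is only correct in the boundary case $|\Neg(F)|=r$. When $|\Neg(F)|=r$ disjoint negative $3$-clauses must each consume exactly one unit of budget, so every ball-satisfying assignment is ``exact'' and sets exactly one variable per negative clause to $0$; one CSP variable per negative clause (domain: which of its three literals is flipped) then works cleanly and arity stays at $3$ automatically, so your worry about overconstrained variables is moot---$(3,3)$-CSP places no bound on the degree of a variable, only on arity and domain size. But when $|\Neg(F)|<r$ there is surplus budget $r-m$, and a ball-satisfying assignment can set two or all three variables of a negative clause to $0$, or even set to $0$ variables that do not occur in any negative clause at all. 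These assignments are invisible to the naive CSP translation, so ``vice versa'' fails: the CSP can be unsatisfiable while $B_r(\alpha)$ contains a satisfying assignment. The paper handles this by introducing a two-parameter search problem with a ``horizontal'' distance (movement among exact colors, mimicking the CSP local search) and a ``vertical'' cost (spending surplus budget on dirty colors or on $V'$-variables), together with a covering code over exact assignments of a chosen radius $s$, and then a careful branching recurrence $L(s,t)$ with base $a^sb^t$. Without some device of this kind your reduction does not cover all assignments in the ball, and the argument does not close.
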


\section{An Improved Algorithm for $\BkS{3}$}

In this section we will describe an algorithm for $\BkS{3}$. At
first, our algorithm for $\BkS{3}$ is not much different from the one
in Dantsin et al. For simplicity assume $\alpha = (1,\dots,1)$, and we
want to decide whether there is a satisfying assignment $\alpha^*$
that sets at most $r$ variables to $0$. We will describe a recursive
algorithm. By $L(r)$ we denote the number of leaves of its recursion
tree.
\subsection*{Intersecting Unsatisfied Clauses}
Suppose $F$ contains two negative $3$-clauses that intersect in one
literal, for example $(\bar{x} \vee \bar{y} \vee \bar{z})$ and
$(\bar{x} \vee \bar{u} \vee \bar{v})$. The algorithm has one ``cheap'' choice,
namely setting $x$ to $0$, and four ``expensive'' choices, namely
setting one of $y,z$ and one of $u,v$ to $0$. Recursing on all
five possibilities yields the recurrence
$$
L(r) \leq L(r-1) + 4L(r-2) \ .
$$
Standard methods show that this
recurrence has a solution in $O(a^r)$, with $a\approx 2.562$ being the
largest root of $x^2-x-4$.  If $F$ contains two negative $3$-clauses
intersecting in two literals, we obtain an even better recurrence: Let
$(\bar{x} \vee \bar{y} \vee \bar{z})$ and $(\bar{x}\vee \bar{y}\vee
\bar{u})$ be those two clauses. The algorithm has two cheap choices,
namely setting $x$ to $0$ or $y$ to $0$. Besides this, it has one
expensive choice, setting $z$ {\em and} $u$ to $0$.  This gives the
following recurrence:
$$
L(r) \leq 2L(r-1) + L(r-2) \ .
$$
This recurrence has a solution in $O\left( \left(\sqrt{2}+1\right)^r \right)
\leq O(2.415^r)$.
\subsection{Disjoint Unsatisfied $3$-Clauses}
Let $\Neg(F)$ denote the set of negative clauses in $F$, i.e., clauses
with only negative literals. Above we showed how to handle the case in
which $\Neg(F)$ contains intersecting clauses. From now on, unless
stated otherwise, we will assume that $\Neg(F)$ consists of pairwise
disjoint negative $3$-clauses. In all previous improvements to
deterministic local search, the case where $\Neg(F)$ consists of $r$
pairwise disjoint $3$-clauses constitutes the worst case. Somewhat
surprisingly, we can solve this case rather quickly, using a
deterministic algorithm for $(3,3)$-CSP by Scheder~\cite{Scheder-CSP}.
We call an assignment {\em exact} if it sets exactly $|\Neg(F)|$
variables to $0$, namely exactly one in each clause in $|\Neg(F)|$.
Here are two simple observations: (i) if $|\Neg(F)| > r$, then
$B_r(1,\dots,1)$ contains no satisfying assignment; (ii) if
$|\Neg(F)|=r$, then every satisfying assignment in $B_r(1,\dots,1)$ is
exact.

\begin{lemma}
  Suppose $\Neg(F)$ consists of $r$ pairwise disjoint $3$-clauses.
  Then $\BkS{3}$ can be solved in time $O^*(2.077^r)$
\end{lemma}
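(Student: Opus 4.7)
The plan is to reduce this special case of $\BkS{3}$ to a $(3,3)$-CSP instance on $r$ variables and then invoke Scheder's deterministic algorithm~\cite{Scheder-CSP}.

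First I would use disjointness to observe that every satisfying $\alpha^* \in B_r(1,\dots,1)$ is exact: each of the $r$ pairwise disjoint 3-clauses $C_1,\dots,C_r \in \Neg(F)$ forces at least one flip inside it, so with a budget of exactly $r$ flips, exactly one variable per $C_i$ is set to $0$ and no other variable is touched. Writing $C_i = (\bar{x}_{i,1} \vee \bar{x}_{i,2} \vee \bar{x}_{i,3})$, I would introduce one CSP variable $v_i$ with domain $\{1,2,3\}$ per negative clause, where the value $v_i = j$ encodes that $x_{i,j}$ is the variable flipped to $0$ inside $C_i$. This gives a bijection between candidate assignments in $B_r(1,\dots,1)$ that satisfy $\Neg(F)$ and tuples in $\{1,2,3\}^r$.

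Next I would translate each non-negative clause $D$ of $F$ into a constraint over $v_1,\dots,v_r$. For every literal $\ell$ of $D$ on a variable $y$: if $y$ does not appear in any $C_i$, then $y = 1$ in every candidate $\alpha^*$, so $\ell$ evaluates to a fixed truth value (discard $D$ altogether if $\ell = y$, otherwise just drop $\ell$ from $D$); if $y = x_{i,j}$ for some $i,j$, then the positive literal $x_{i,j}$ is true in $\alpha^*$ iff $v_i \neq j$, and the negative literal $\bar{x}_{i,j}$ is true iff $v_i = j$. Each surviving $D$ thus becomes a disjunction of at most three equality/disequality conditions on at most three of the $v_i$, which is exactly a $(3,3)$-CSP constraint; a disequality $v_i \neq j$ can, if the input format demands it, be split into the two equality literals $v_i = j'$ and $v_i = j''$ without increasing the arity.

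By construction the resulting $(3,3)$-CSP on $r$ variables is satisfiable iff $B_r(1,\dots,1)$ contains a satisfying assignment of $F$: any feasible $(v_i)$ induces an exact $\alpha^*$ that satisfies every $C_i$ by definition and every non-negative $D$ by the translation, and conversely any satisfying $\alpha^*$ in the ball is exact and yields a feasible $(v_i)$. Running Scheder's $(3,3)$-CSP algorithm on this instance then gives the claimed $O^*(2.077^r)$ bound. The only point that really requires checking is that the translated constraints fit the format accepted by Scheder's algorithm, but this is immediate: each literal of $D$ contributes at most one condition on at most one $v_i$, so the arity bound $3$ and the domain bound $3$ hold, and nothing algorithmically new is needed beyond the black-box call.
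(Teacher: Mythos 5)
Your proof is correct and follows essentially the same route as the paper: observe that disjointness plus the budget $r$ forces exactness, encode the choice of satisfied literal in each negative clause by a ternary variable, translate the remaining clauses into arity-$\le 3$ constraints, and invoke the $(3,3)$-CSP algorithm of~\cite{Scheder-CSP}. The only cosmetic difference is that you spell out the housekeeping of reformatting disequalities, which the paper leaves implicit.
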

\begin{proof}
  If $\Neg(F)$ consists of $r$ pairwise disjoint $3$-clauses, then $F$
  has a satisfying assignment $\alpha^*$ in $B_r{(1,\dots,1)}$ if and
  only if it has an exact satisfying assignment. An exact assignment
  can satisfy each $C \in \Neg(F)$ in three different ways: Through
  its first, second, or third literal. We introduce a ternary variable
  $x_C$ to represent these three choices.  For example, if $C =
  (\bar{x} \vee \bar{y} \vee \bar{z})$, then every occurrence of the
  literal $y$ can be replaced by the literal $(x_C \ne 2)$, and a
  literal $\bar{y}$ can be replaced by $(x_C \ne 1 \wedge x_C \ne 3)$.
  If $u \in \vbl(F) \setminus \vbl(\Neg(F))$, we can replace the
  literal $u$ by \texttt{true} and $\bar{u}$ by \texttt{false}, as
  every exact assignment sets $u$ to $1$.  In this manner, we
  translate an instance of $\BkS{3}$ into a $(3,3)$-CSP problem over
  $r$ variables. By a result of~\cite{Scheder-CSP}, one can solve this
  in time $O^*(2.077^r)$.
\end{proof}
Let $m := |\Neg(F)|$. We have just seen that the case $m = r$ is
relatively easy.  If $m < r$ then we have a ``surplus budget'' of
$r-m$ that we can spend on satisfying multiple literals in some $C \in
\Neg(F)$, or setting variables $u$ to $0$ that do not occur in
$\Neg(F)$ at all.  This will make things more complicated and lead to
a worse running time:
\begin{lemma}
  Suppose $\Neg(F)$ consists of pairwise disjoint $3$-clauses.  Then
  $\BkS{3}$ can be solved in time $O^*(b^r)$, where
  $$
  b  = \frac{(5 + \sqrt{57})^2}{4(8 + \sqrt{57})} \approx 2.533 \ .
  $$ 
  \label{main-lemma}
\end{lemma}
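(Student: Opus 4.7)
The plan is to extend the CSP-reduction of the previous lemma to the case $k := r-m > 0$, by interleaving a branching phase (for the ``surplus'' zeros) with a final exact CSP call.

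First, I would record the following structural fact: any $\alpha^* \in B_r(1,\ldots,1)$ satisfying $F$ is determined by a choice, for each $C \in \Neg(F)$, of a nonempty subset $S_C \subseteq \vbl(C)$ to be set to $0$, plus a set $Z \subseteq \vbl(F) \setminus \vbl(\Neg(F))$ of ``outside'' zeros, subject to $\sum_C(|S_C|-1) + |Z| \leq k$. The exact case corresponds to $|S_C|=1$ for all $C$ and $Z = \emptyset$, and is handled in time $O^*(2.077^m)$ by the previous lemma.

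Next, I would iterate through $\Neg(F)$ clause by clause. For each $C = (\bar x \vee \bar y \vee \bar z)$, I would branch on the size of $S_C$: one option defers the choice (leaving $|S_C|=1$ to the final CSP call, using no surplus); three options commit to a pair $S_C$ with $|S_C|=2$ (each using one unit of surplus); one option commits to the triple $S_C = \vbl(C)$ (using two). Outside-zero variables would be handled by a secondary branching triggered whenever a committed pair or triple causes a variable that occurs positively in some non-negative clause of $F$ to be flipped to $0$, thereby producing a new unsatisfied clause; whenever this new clause intersects another unsatisfied clause, I would invoke the intersecting-case recurrence $L(r) \le L(r-1) + 4L(r-2)$ from the start of Section~2 (whose root $a \approx 2.562$ is just above $b$).

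With $L(m,k)$ denoting the number of leaves of the branching tree, the deferred clauses at each leaf form a $(3,3)$-CSP instance solved in time $O^*(2.077^{m'})$, so the total running time is $\sum_{\text{leaves}} O^*(2.077^{m'})$. Plugging the ansatz $L(m,k) \le \alpha^m \beta^k$ with $\alpha = 2.077$ into the branching recurrence and optimizing against the boundary $m+k=r$ should yield the claimed constant $b = (43-\sqrt{57})/14 = (5+\sqrt{57})^2/\bigl(4(8+\sqrt{57})\bigr)$.

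The main obstacle is the simultaneous handling of outside-zero variables and of the new unsatisfied clauses that pair/triple commitments can introduce: these can intersect existing negative clauses and, if not carefully controlled, push the analysis above the intersecting-case root $a \approx 2.562$ and hence above $b$. Much of the work in fleshing out this proof will be to specify the branching precisely enough that every subtree falls into one of (i)~the exact-CSP base, (ii)~the intersecting-case bound, or (iii)~a controlled continuation of the disjoint-case recursion, and then to verify that the resulting optimization indeed evaluates to the stated algebraic value of $b$.
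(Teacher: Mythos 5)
Your proposal takes a genuinely different route from the paper, and it contains gaps serious enough that the argument does not go through as sketched.

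The paper's proof does not branch on the sizes of the sets $S_C$ clause by clause. Instead it defines a two-parameter ``double ball'' problem $\DBS(\alpha, s, t)$: given an assignment $\alpha$ satisfying $\Neg(F)$, find a satisfying $\alpha^*$ within horizontal distance $s$ (moves among the three \emph{exact} colors, along solid edges in Figure~\ref{figure-seven}) and vertical cost $t$ (moves toward \emph{dirty} colors along dotted edges, or setting $V'$-variables to $0$). It then runs a \emph{local search} $\dbs$ whose branching is driven by an unsatisfied clause $C \notin \Neg(F)$ (which necessarily has at most two negative literals), and crucially covers the set of exact starting assignments by the same covering code used in the $(3,3)$-CSP algorithm of~\cite{Scheder-CSP}. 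The two-variable recurrence on $L(s,t)$ then yields $O(a^s b^t)$ with $a,b$ chosen so that the $m$-dependence in $\left(\tfrac{3a^2}{a^2+a+1}\right)^m b^{r-m}$ cancels, giving $b^r$.

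The concrete problems with your plan are the following. First, your per-clause branching (one ``defer'' option, three pair-commits, one triple-commit) is too coarse. If you weight each leaf by the CSP cost $2.077^{m'}$ and write the recurrence for the weighted leaf count, you get $T(m,k) \le 2.077\, T(m-1,k) + 3\, T(m-1,k-1) + T(m-1,k-2)$; balancing the exponential base so that the bound is uniform in $m$ forces $a^3 = 2.077 a^2 + 3a + 1$, which has its root around $3.1$, far above $2.533$. The paper avoids ever performing a $3$-way branch over the choice of a pair: it never ``commits'' to a pair up front, and the negative literals in an unsatisfied clause $C \notin \Neg(F)$ give only \emph{two} recursive calls each (one solid, one dotted, with carefully weighted budget reductions as in Cases~2.3 and~2.4). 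Second, your handling of outside zeros is conceptually off: variables in $V' = \vbl(F)\setminus\vbl(\Neg(F))$ are never ``flipped to $0$ by a committed pair''; they need to be flipped because a clause $D$ with only $V'$-negative literals and committed-to-$0$ $\Neg(F)$-variables becomes unsatisfiable under the ``all $V'$ set to $1$'' CSP translation. You would need a dedicated search over $V'$-flips, and that is exactly what the paper's dotted edges on $\{0,1\}$ (Figure~\ref{figure-graph-two}) and the $\cost$-budget $t$ provide. Third, invoking the intersecting-case recurrence with root $\approx 2.562 > b$ inside a subtree is fatal: if any subtree costs $2.562^{\text{(budget)}}$ and you cannot guarantee it is entered with a strictly smaller budget in a way that restores the $b^r$ bound, the overall claim collapses. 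The paper never needs this escape hatch because its $\dbs$ analysis is self-contained and does not re-enter the intersecting case.

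What your plan does correctly identify is that the surplus $k=r-m$ has to be interleaved with a CSP-type computation and that the final constant should be $b=(5+\sqrt{57})^2/\bigl(4(8+\sqrt{57})\bigr)$. But to make it work you would have to replace ``branch on $|S_C|$ per clause'' with a local-search-driven branching (so that pair choices are never enumerated), replace ``intersecting-case fallback'' with a genuine two-parameter recurrence, and incorporate the covering code over exact assignments to recoup the $\bigl(\tfrac{3a^2}{a^2+a+1}\bigr)^m$ factor that makes the optimization independent of $m$. At that point you would essentially have reconstructed the paper's $\DBS$/$\dbs$ argument.
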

In the rest of this section we will prove the lemma.  Write $V' :=
\vbl(F) \setminus \vbl(\Neg(F))$. Let $C=(\bar{x} \vee \bar{y} \vee
\bar{z}) \in \Neg(F)$. An assignment to the variables $x,y,z$ can be
represented by a string in $\{0,1\}^3$.  Seven of them satisfy $C$:
$011$, $101$, $110$, $001$, $010$, $100$, and $000$. We call them {\em
  colors}. The first three colors are {\em exact}, the latter four
{\em dirty}. Any assignment satisfying $\Neg(F)$ induces a
$7$-coloring of $\Neg(F)$. An assignment is exact if and only if it
sets every $u \in V'$ to $1$ and assigns every negative clause an
exact color (i.e., $011$, $101$, or $110$). We define a graph $G$ on
the seven colors, see Figure~\ref{figure-seven}.
\begin{figure}
  \begin{center}
    \epsfig{file=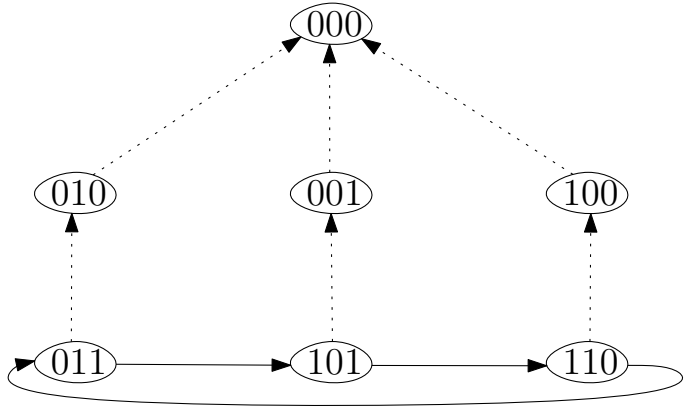,width=0.6\textwidth}
  \end{center}
  \caption{A graph on the seven colors with two kinds of edges.
  The seven colors represent the seven ways to satisfy
  a clause $(\bar{x} \vee \bar{y} \vee \bar{z})$.}
  \label{figure-seven}
\end{figure}
This graph has two types of edges: solid and dotted ones.
\begin{definition}
  For two colors $c,c'$, let $d(c,c')$ be the minimum number of solid
  edges on a directed path from $c$ to $c'$ (and $\infty$ if no such
  path exists), and $\cost(c,c')$ the minimum number of dashed edges
  ($\infty$ if no such path exists).
\end{definition}
For example $d(011, 100)=2$ and $\cost(011,100)=1$, but
$d(010,011)=\cost(010,011)=\infty$. Let $\alpha, \beta$ be two
assignments that satisfy $\Neg(F)$.  Recall that $\alpha$ and $\beta$
induce a $7$-coloring of $\Neg(F)$, thus for $C \in \Neg(F)$, we write
$\alpha(C), \beta(C)$ to denote this color.  To a variable $x \in V'$,
the assignment $\alpha$ does not assign one of the seven colors, but
simply a truth value, $0$ or $1$. To simplify notation, we define
$d(0,1)=d(1,0)=0$, $\cost(1,0)=1$, and $\cost(0,1)=\infty$.  Just
think of a graph on vertex set $\{0,1\}$ with a dotted
edge from $1$ to $0$, and no edge from $0$ to $1$ 
(Figure~\ref{figure-graph-two}).\\

\begin{figure}
  \begin{center}
    \epsfig{file=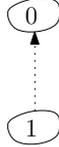,width=0.05\textwidth}
  \end{center}
  \caption{A graph on the two truth values $\{0,1\}$, representing
    the two ways to set a variable $x \in V'$.}
  \label{figure-graph-two}
\end{figure}
We define a ``horizontal distance'' from $\alpha$ to $\beta$
\begin{eqnarray}
d(\alpha,\beta) := \sum_{C \in \Neg(F)} d(\alpha(C),\beta(C)) \ .
\end{eqnarray}
and a ``vertical distance'' from $\alpha$ to $\beta$
\begin{eqnarray}
\cost(\alpha,\beta) := \sum_{C \in \Neg(F)} \cost(\alpha(C),\beta(C)) 
+ \sum_{x \in V'} \cost(\alpha(x),\beta(x)) \ .
\label{def-cost}
\end{eqnarray}
We say {\em horizontal} and {\em vertical} because these terms
correspond to the orientation of the two types of edges in
Figure~\ref{figure-seven}, as the solid edges are horizontal and the
dotted edges are vertical. 
\begin{proposition}
  Let $\alpha, \beta$ be two assignments satisfying $\Neg(F)$.
  \begin{enumerate}
    \item If $\alpha, \beta$ are exact,
    then $d(\alpha,\beta) \leq r$ and $\cost(\alpha,\beta)=0$.
    \label{prop-distance-cost-i}
    \item If $\alpha$ is exact, then $d(\alpha,\beta) \leq r$
      and $\cost(\alpha,\beta)$ is finite.
    \item If $B_r(1,\dots,1)$ contains an assignment $\alpha^*$
      satisfying $F$, then there is an exact assignment $\beta$ such
      that $d(\beta,\alpha^*) = 0$ and $\cost(\beta,\alpha^*) \leq
      r-m$. \label{prop-distance-cost-iii}
    \end{enumerate}
\label{prop-distance-cost}
\end{proposition}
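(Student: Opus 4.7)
My plan is to verify each item of Proposition~\ref{prop-distance-cost} by decomposing $d$ and $\cost$ clause-by-clause and variable-by-variable as in~\eqref{def-cost}, and reading off the required per-vertex bounds from Figure~\ref{figure-seven}.

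For item~(\ref{prop-distance-cost-i}), exactness forces $\alpha(x)=\beta(x)=1$ on every $x\in V'$ and $\alpha(C),\beta(C)\in\{011,101,110\}$ on every $C\in\Neg(F)$. Since the induced subgraph of Figure~\ref{figure-seven} on these three exact vertices contains only solid edges, each clause contributes $\cost(\alpha(C),\beta(C))=0$ and at most a constant amount to $d$; summing over the $m\le r$ clauses (observation~(i) above) gives $\cost(\alpha,\beta)=0$ and $d(\alpha,\beta)\le r$.

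For item~(2), finiteness of $\cost(\alpha,\beta)$ is immediate from Figure~\ref{figure-seven} (every color is reachable from every exact color via a path using dashed edges) together with $\cost(1,\beta(x))\in\{0,1\}$ on $V'$. For the bound $d(\alpha,\beta)\le r$ I would establish the per-clause inequality $d(e,c)\le z(c)$, where $e$ is any exact color and $z(c)$ denotes the number of zeros in color $c$; this is a finite check over the seven colors. Summing yields
\[
  d(\alpha,\beta)=\sum_{C\in\Neg(F)}d(\alpha(C),\beta(C)) \le \sum_{C}z(\beta(C)) \le r,
\]
where the last step bounds the zeros of $\beta$ inside $\vbl(\Neg(F))$ by its total Hamming weight, which is at most $r$ in the intended regime $\beta\in B_r(1,\dots,1)$.

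For item~(\ref{prop-distance-cost-iii}), I construct $\beta$ from a satisfying $\alpha^*\in B_r(1,\dots,1)$ as follows: set $\beta(x)=1$ on $V'$, and for each clause $C$ let $\beta(C)=\alpha^*(C)$ when this color is already exact, otherwise pick an exact ``ancestor'' $\beta(C)$ from which $\alpha^*(C)$ is reached using dashed edges only. Figure~\ref{figure-seven} exhibits at least one such ancestor for every dirty color, so $\beta$ is well-defined and exact. The dashed-only witnessing paths give $d(\beta(C),\alpha^*(C))=0$ and, on $V'$, $d(1,\alpha^*(x))=0$, hence $d(\beta,\alpha^*)=0$; moreover each dashed step adds exactly one new zero, so $\cost(\beta(C),\alpha^*(C))=z(\alpha^*(C))-1$ and, together with the $V'$ contributions,
\[
  \cost(\beta,\alpha^*) = \sum_{C}\bigl(z(\alpha^*(C))-1\bigr) + |\{x\in V':\alpha^*(x)=0\}| = Z - m \le r-m,
\]
where $Z$ is the total number of zeros of $\alpha^*$ and $Z\le r$ since $\alpha^*\in B_r(1,\dots,1)$. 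The principal obstacle throughout is verifying the two clause-level graph facts used above — the inequality $d(e,c)\le z(c)$ and the existence of a dashed-only exact ancestor for each dirty color — both of which reduce to a finite case-check on the seven-vertex graph of Figure~\ref{figure-seven}, but rely on its precise orientation of solid and dashed edges.
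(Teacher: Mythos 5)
The paper states Proposition~\ref{prop-distance-cost} without giving any proof, so you are filling a genuine gap rather than paralleling an existing argument. Your treatment of the parts that the paper actually uses downstream --- namely $\cost(\alpha,\beta)=0$ in item~(\ref{prop-distance-cost-i}) and all of item~(\ref{prop-distance-cost-iii}) --- is sound. For item~(\ref{prop-distance-cost-i}), exactness confines both $\alpha(C)$ and $\beta(C)$ to the solid $3$-cycle on $\{011,101,110\}$, so no dashed edge is ever needed and $\cost(\alpha(C),\beta(C))=0$ for every $C$. For item~(\ref{prop-distance-cost-iii}), lifting each $\alpha^*(C)$ to a dashed-only exact ancestor is exactly the right construction; the identity $\cost(\beta(C),\alpha^*(C))=z(\alpha^*(C))-1$ and the bookkeeping $\cost(\beta,\alpha^*)=Z-m\le r-m$ are correct, and the required ancestor exists for every dirty color in Figure~\ref{figure-seven}.

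The problem is with the $d(\alpha,\beta)\le r$ assertions in items~(\ref{prop-distance-cost-i}) and~(2). Your proposed per-vertex inequality $d(e,c)\le z(c)$ is false: the solid edges form a \emph{directed} $3$-cycle $011\to101\to110\to011$ (this is forced by Case~2.4, where the move $011\rightsquigarrow 110$ is charged $s-2$), so $d(011,110)=2$ while $z(110)=1$. The correct uniform per-clause bound when $\alpha(C)$ is exact is $d(\alpha(C),\beta(C))\le 2$, which only gives $d(\alpha,\beta)\le 2m$. Your item~(\ref{prop-distance-cost-i}) argument (``at most a constant per clause'') likewise yields $2m\le 2r$, not $\le r$. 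In item~(2) you additionally import the hypothesis $\beta\in B_r(1,\dots,1)$ (``in the intended regime''), but the statement only assumes $\beta$ satisfies $\Neg(F)$. Indeed, with $m=r$ and $\beta$ exact with each $\beta(C)$ two solid steps past $\alpha(C)$, one gets $d(\alpha,\beta)=2r$, so the $\le r$ bound as literally written admits no proof. This looks like an imprecision in the paper's statement --- $d(\alpha,\beta)\le 2m$ is what matches the covering-code radius range $s\in\{0,\dots,2m\}$ used later --- and since the subsequent proposition only invokes $\cost(\alpha,\beta)=0$ and item~(\ref{prop-distance-cost-iii}), nothing downstream breaks. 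But you should not present $d(e,c)\le z(c)$ as a finite check that will succeed, because it will not; you would need to replace the target bound by $2m$ and drop the spurious $B_r$ assumption in item~(2).
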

We want to mimic the $(3,3)$-CSP of~\cite{Scheder-CSP}, although in
the case $|\Neg(F)| < r$, we cannot directly translate our instance of
$\BkS{3}$ into an instance of $(3,3)$-CSP. The idea, roughly speaking,
is to imitate the $(3,3)$-CSP-algorithm on the exact colors, while
using a traditional branching algorithm to search through dirty
colors. Let us be more precise:\\

We cover the set of exact assignments by a good covering code
$\mathcal{C}$. Formally, we want a (small) set $\mathcal{C}$
of exact assignments such that for every exact $\beta$, there is some
$\alpha \in \mathcal{C}$ such that $d(\alpha,\beta) \leq s$, where $s
\in \mathbb{N}$ is a suitably chosen integer. 
\begin{proposition}
  Let $\mathcal{C}$ as described. Then for every satisfying assignment
  $\alpha^* \in B_r(1,\dots,1)$, there exists an exact assignment
  $\alpha \in \mathcal{C}$ such that $d(\alpha,\alpha^*) \leq s$ and
  $\cost(\alpha,\alpha^*) \leq r-m$.
\end{proposition}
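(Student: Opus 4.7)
The plan is to combine Proposition~\ref{prop-distance-cost} with the covering property of $\mathcal{C}$ via the triangle inequality for both $d$ and $\cost$. First, I would apply part~(\ref{prop-distance-cost-iii}) of Proposition~\ref{prop-distance-cost} to the satisfying assignment $\alpha^* \in B_r(1,\dots,1)$ to obtain an exact ``companion'' assignment $\beta$ with $d(\beta,\alpha^*) = 0$ and $\cost(\beta,\alpha^*) \leq r-m$. Since $\beta$ is exact and $\mathcal{C}$ covers the exact assignments within $d$-distance $s$, there is some $\alpha \in \mathcal{C}$ with $d(\alpha,\beta) \leq s$. Both $\alpha$ and $\beta$ are exact, so part~(\ref{prop-distance-cost-i}) of Proposition~\ref{prop-distance-cost} gives $\cost(\alpha,\beta) = 0$.

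It then suffices to prove that $d(\cdot,\cdot)$ and $\cost(\cdot,\cdot)$ satisfy the triangle inequality on assignments satisfying $\Neg(F)$. Because both quantities are defined as sums over the coordinates $C \in \Neg(F)$ and $x \in V'$, it is enough to check the triangle inequality coordinate-wise in the directed graphs of Figures~\ref{figure-seven} and~\ref{figure-graph-two}. But at each coordinate, $d$ and $\cost$ are shortest-path weights counting solid and dashed edges, respectively, and concatenating a directed path from $c_1$ to $c_2$ with one from $c_2$ to $c_3$ yields a directed path from $c_1$ to $c_3$ whose solid-edge and dashed-edge counts each equal the sum of the two original counts. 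Hence $d(c_1,c_3) \leq d(c_1,c_2) + d(c_2,c_3)$ and similarly for $\cost$, and summing over the coordinates gives the global triangle inequality.

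Applying this to the triple $\alpha, \beta, \alpha^*$ yields $d(\alpha,\alpha^*) \leq d(\alpha,\beta) + d(\beta,\alpha^*) \leq s + 0 = s$ and $\cost(\alpha,\alpha^*) \leq \cost(\alpha,\beta) + \cost(\beta,\alpha^*) \leq 0 + (r-m) = r-m$, which is exactly the desired conclusion. I do not anticipate any real obstacle here: the only mildly delicate point is the presence of $\infty$ values in the coordinate-wise definitions, but all four quantities on the right-hand sides above are finite, so the arithmetic causes no trouble. The conceptual heart of the proof is really the ``piping'' idea itself, namely that one handles $\alpha^*$ by first passing through its exact companion $\beta$ and then applying a covering code to the (much smaller) set of exact assignments.
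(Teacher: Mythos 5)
Your proof is correct and follows essentially the same route as the paper: apply part (iii) of Proposition~\ref{prop-distance-cost} to get the exact companion $\beta$, use the covering property to find $\alpha \in \mathcal{C}$ near $\beta$, invoke part (i) for $\cost(\alpha,\beta)=0$, and finish with the triangle inequality for $d$ and $\cost$. Your extra justification of the triangle inequality (via concatenating directed paths coordinate-wise) and your remark about the $\infty$ values are welcome details the paper leaves to the reader.
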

\begin{proof}
  By point (\ref{prop-distance-cost-iii}) of
  Proposition~\ref{prop-distance-cost}, there is an exact assignment
  $\beta$ such that $d(\beta,\alpha^*) = 0$ and $\cost(\beta,\alpha^*)
  \leq r-m$. By the properties of $\mathcal{C}$, there is an $\alpha
  \in \mathcal{C}$ such that $d(\alpha,\beta) \leq s$. By 
  point (\ref{prop-distance-cost-i}) of
  Proposition~\ref{prop-distance-cost}, $\cost(\alpha,\beta)=0$.
  Since $d$ and $\cost$ obey the triangle inequality (which is 
  easy to verify), it follows that
  $d(\alpha,\alpha^*) \leq s$ and $d(\alpha,\alpha^*) \leq r-m$.
\end{proof}
The main idea behind the algorithm of Dantsin et al. was
to focus not on $3$-SAT itself but on the parametrized problem
$\BkS{3}$. We will do the same here. We define 
the following decision problem: \\

\textbf{$\DBS$.} Given a $3$-CNF formula with $m \leq r$ pairwise
disjoint negative clauses, an assignment $\alpha$ satisfying
$\Neg(F)$, and two integers $s$ and $t$: Is there an assignment
$\alpha^*$ satisfying $F$
such that $d(\alpha, \alpha^*) \leq s$ and $\cost(\alpha,\alpha^*) \leq t$?\\

There are two special cases in which $\DBS$ can be solved rather
quickly. First, consider the case $t=0$. If $C$ is a clause
unsatisfied by $\alpha$, we have three possibilities to change
$\alpha$: At the first, second, or third literal of $C$. However, we
change $\alpha$ only along solid (horizontal) edges, since $t=0$
anyway rules out any assignment $\alpha^*$ differing from $\alpha$ by
a dotted (horizontal) edge. Since every vertex in $G$ is left by at
most one solid edge, the running time is $O^*(3^s)$. This
algorithm is in fact identical to the deterministic local search 
algorithm for $\BdkCSP{3}{3}$ in~\cite{Scheder-CSP}.\\

Second, consider the case $s=0$. If $\alpha$ does not satisfy clause
$C$, we again have three possibilities to change $\alpha$. This time,
however, solid edges are ruled out by $s=0$. Since every vertex
(color) is left by at most one dotted edge, this yields a running time
of $O^*(3^t)$.  In fact, the running time is better, namely
$O^*(2^t)$: Let $C$ be a clause not satisfied by $\alpha$.
Clearly $C \not \in \Neg(F)$, since $\alpha$ satisfies $\Neg(F)$.
Hence $C$ has at least one positive literal $x$. But look at
Figures~\ref{figure-seven} and~\ref{figure-graph-two}: Following a
dotted edge always means setting one additional variable to $0$, never
setting one to $1$. Therefore, the positive literal $x$ cannot become
satisfied by following a dotted edge, and there are actually at most
two choices to change $\alpha$, resulting in a running time of 
$O^*(2^t)$.\\

Let us summarize: The problem $\DBS$ with parameters $s$ and $t$ can
be solved in time $O^*(2.077^s)$ if $t=0$ and $O^*(2^t)$ if $s=0$. It
would be nice if these two border cases combined into a general running
time of $O^*(2.077^s2^t)$. Alas, this is not true. Or rather we do not
know how.

 \subsection*{An Algorithm for $\DBS$}
 
 We give a recursive algorithm $\dbs(F, \alpha, s,t)$ 
 for $\DBS$. We start with an assignment
 $\alpha$ satisfying $\Neg(F)$. As long as $\alpha$ does not 
 satisfy $F$, and $s,t \geq 0$, we modify $\alpha$ locally,
 in the hope of coming closer to a satisfying assignment, and
 continue recursively. \\
 
 There are two simple base cases. First, if $s<0$ or $t<0$, the
 algorithm returns \texttt{failure}. If $s,t\geq 0$ and $\alpha$
 satisfies $F$, it returns $\alpha$. Otherwise, $s,t \geq 0$, and
 there is some clause $C \in F \setminus \Neg(F)$ which $\alpha$ does
 not satisfy (recall that $\alpha$ satisfies $\Neg(F)$).  The clause
 $C$ has at most three literals, and at most two of them are negative.
 For each literal $\ell \in C$, the algorithm modifies $\alpha$ at one
 position in order to satisfy $\ell$. Note that each color has at most
 two outgoing edges, so to satisfy $\ell$, there are at most two
 direct ways to change the value of $\ell$ under $\alpha$. This means
 that each literal entails at most two recursive calls. The exact
 nature of these calls depends on the literal itself.\\
  
 We investigate which recursive calls are necessary when we try to
 change the value $\alpha$ assigns to $\ell \in C$. Note that $\alpha$
 currently does not satisfy $\ell$.  There are two cases: Either
 $\vbl(\ell) \in \vbl(\Neg(F))$ or $\vbl(\ell) \in V'$ (above, we
 defined $V'$ to be $\vbl(F) \setminus \vbl(\Neg(F))$). We start with
 the less interesting case.\\

 \textbf{Case 1.} $\vbl(\ell) \in V'$.\\

 \textbf{Case 1.1} $\ell$ is a positive literal, i.e., $\ell = x$ for
 some $x \in V'$. In this case, $\alpha(x)=0$, and for $x$, no edge
 leads back from $0$ to $1$. The algorithm gives
 up in this branch.\\
 
 \textbf{Case 1.2} $\ell$ is a negative literal, i.e., 
 $\ell = \bar{x}$ for some $x \in V'$. In this case,
 $\alpha(x)=1$, and there is only one way to change $\alpha$:
 Take the dotted edge, setting $x$ to $0$. The algorithm
 takes one recursive call:
 $$
 	\dbs(F,\alpha[x \rightarrow 0], s, t-1) \ .
 $$
 \textbf{Case 2.} $\vbl(\ell) \in \vbl(\Neg(F))$.  In this case, there
 is exactly one clause $D = (\bar{x} \vee \bar{y} \vee \bar{z}) \in
 \Neg(F)$ such that $\ell \in \{x,y,z,\bar{x},\bar{y},\bar{z}\}$.\\
 
 \textbf{Case 2.1} Suppose $\alpha(D)$ is dirty.
 There is only one outgoing edge, which is dotted, leading
 to some assignment $\alpha'$. Hence there is at most one
 recursive call, regardless of the literal $\ell$:
 $$
 	\dbs(F, \alpha', s,t-1) \ .
 $$
 In the remaining cases, we can assume that $\alpha(D)$ is pure.
 Without loss of generality, we assume that $\alpha(D) = 011$.  This
 means that $\alpha(x) = 0$ and $\alpha(y)=\alpha(z)=1$. Since
 $\alpha$ does not satisfy $\ell$, we conclude that $\ell \in \{x,
 \bar{y},\bar{z}\}$, which gives rise to three cases. As it will become
 clear soon, these three cases
 are the interesting cases in our analysis, whereas Cases 1.1, 1.2, and
 2.1 can be ignored---by our analysis, of course, not by the algorithm.\\
  
 \textbf{Case 2.2} $\ell = x$. Dotted edges only set variables to $0$, 
 so they are of no help here. The algorithm changes $\alpha(D)$ to $101$ by
 choosing the outgoing solid edge, and calls itself recursively:
 $$
 	\dbs(F, \alpha[D \rightarrow 101], s-1, t)   \ .
 $$
 
 \textbf{Case 2.3} $\ell = \bar{y}$. We have several possibilities. A
 satisfying assignment that satisfies $\bar{y}$ could set $D$ to
 $101$, $001$, $100$, or $000$. We do not want to cause four recursive
 calls. Note that if a satisfying assignment $\alpha^*$  sets
 $D$ to $101$, $001$, or $100$, then changing $\alpha(D)$ from $011$
 to $101$ decreases $d(\alpha,\alpha^*)$ by $1$ in any case. If
 $\alpha^*(D) = 000$, then changing $\alpha(D)$ to $000$ decreases
 $\cost(\alpha,\alpha^*)$ by $2$. Therefore the algorithm calls itself
 twice:
 \begin{eqnarray*}
 	& \dbs(F, \alpha[D \rightarrow 101],s-1,t)  &  \\ 
	&  \dbs(F, \alpha[D \rightarrow 000],s, t-2)  &
 \end{eqnarray*}
 
 \textbf{Case 2.4} $\ell = \bar{z}$. Observe that $\alpha^*(D)$ is
 either $110$, $010$, $100$, or $000$. If $\alpha^*(D)$ is $110$ or
 $100$, then changing $\alpha(D)$ from $011$ to $110$ decreases
 $d(\alpha,\alpha^*)$ by $2$. If $\alpha^*(C)$ is $010$ or $000$,
 changing $\alpha(D)$ from $011$ to $010$ decreases
 $\cost(\alpha,\alpha^*)$ by 1. The algorithm thus calls itself twice:
 \begin{eqnarray*}
 	& \dbs(F,\alpha[D \rightarrow 110], s-2, t) & \\
	& \dbs(F,\alpha[D \rightarrow 010], s, t-1)  &
 \end{eqnarray*}
 
 For the analysis of the running time, we can ignore Cases 1.1, 1.2, 
 and 2.1, since by Cases 2.2--2.4.  Let $L(s,t)$ be the worst-case
 number of leaves in a recursion tree of $\dbs(F,\alpha,s,t)$.
 \begin{proposition}
   If $s < 0$ or $t < 0$, then $L(s,t) = 1$.
   Otherwise, 
   \begin{eqnarray*}
     L(s,t) \leq L(s-1,t) + 2 \max\left(
       \begin{array}{l}
         L(s-1,t) + L(s,t-2) , \\
         L(s-2,t) + L(s,t-1)
       \end{array}
       \right) \ .       
   \end{eqnarray*}
   \label{proposition-L-s-t}
 \end{proposition}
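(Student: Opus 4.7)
The plan is to analyze the recursion tree of $\dbs(F, \alpha, s, t)$ by considering all branches taken at the root and summing the leaf counts of the resulting subtrees. The base case is immediate: if $s < 0$ or $t < 0$, the algorithm returns \texttt{failure} at once, contributing a single leaf. Otherwise, I would first dispose of the trivial case in which $\alpha$ already satisfies $F$ (no recursion, again one leaf) and assume that the algorithm has selected an unsatisfied clause $C \in F \setminus \Neg(F)$ on which to branch. Since $C \notin \Neg(F)$, at least one literal of $C$ is positive; since $|C| \leq 3$, at most two are negative.

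Next I would tabulate, for each of the six subcases distinguished before the proposition, the contribution of a single literal $\ell \in C$ to the leaf count: Case~1.1 contributes $0$ (the algorithm gives up in this branch); Cases~1.2 and~2.1 each contribute $L(s, t-1)$; Case~2.2 contributes $L(s-1, t)$; Case~2.3 contributes $L(s-1, t) + L(s, t-2)$; and Case~2.4 contributes $L(s-2, t) + L(s, t-1)$. A brief auxiliary observation I would record is that $L$ is monotone nondecreasing in both arguments: any branching strategy available on a more restrictive instance remains available on a less restrictive one, so the worst-case leaf count can only grow. With this, the contributions in Cases~1.1, 1.2, and~2.1 are pointwise dominated by those in Cases~2.2, 2.3, and~2.4 respectively.

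Summing over the literals of $C$, the recurrence is maximized by a $3$-clause with one positive literal (falling into Case~2.2) and two negative literals (each falling into Case~2.3 or~2.4). Upper-bounding each negative literal's contribution by the maximum of the two alternatives yields
$$L(s,t) \leq L(s-1,t) + 2\max\bigl(L(s-1,t) + L(s,t-2),\; L(s-2,t) + L(s,t-1)\bigr),$$
as claimed. The only slightly subtle point is the mixed scenario where one negative literal lies in Case~2.3 and the other in Case~2.4; but the sum of the two alternatives is trivially at most $2\max$ of them, so this case is absorbed. I expect the main obstacle in writing this up cleanly to be not the arithmetic but the bookkeeping: one must be sure to account for the fact that the ``exact color'' subcases~2.2--2.4 dominate the ``dirty color'' subcase~2.1 even though the latter makes only a single recursive call, and this is precisely where the explicit monotonicity observation is needed.
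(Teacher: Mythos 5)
Your proof follows the same overall strategy as the paper: induction on $s,t$, dispose of the base case, and then bound the recursion by summing the contribution of each literal of an unsatisfied clause $C \in F \setminus \Neg(F)$. The paper's own proof is terser; it simply asserts that the worst case is $|C|=3$ with one positive and two negative literals, the negative ones falling into Cases 2.3/2.4, and it leaves the dismissal of Cases 1.1, 1.2, 2.1 to the earlier remark that these can be ``ignored by our analysis.'' You attempt to justify that dismissal explicitly, which is reasonable, but two of your dominations are off.

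First, you pair Case 1.2 with Case 2.3, asserting that monotonicity of $L$ yields $L(s,t-1) \leq L(s-1,t) + L(s,t-2)$. Monotonicity only gives $L(s,t-2) \leq L(s,t-1)$, which points the wrong way; and for the $L(s,t) \sim a^s b^t$ behaviour the paper extracts (with $a\approx 6.27$, $b\approx 2.53$) this inequality reduces to $a(b-1)\leq b^2$, which is false. Fortunately the fix is trivial and needs no monotonicity lemma at all: $L(s,t-1) \leq L(s-2,t) + L(s,t-1)$, i.e.\ Case 1.2 (and Case 2.1 for a negative literal) is dominated by the Case 2.4 expression, hence by the $\max$. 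So replace the 1.2$\to$2.3 pairing by 1.2$\to$2.4 and drop the monotonicity observation.

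Second, your table lists Case 2.1 as contributing $L(s,t-1)$ for every literal, but Case 2.1 can also occur for a \emph{positive} literal $\ell = x$ with $\vbl(x)\in\vbl(\Neg(F))$ and $\alpha(D)$ dirty. There the algorithm makes \emph{no} recursive call: as the paper notes when discussing the $s=0$ case, dotted edges only turn a $1$ into a $0$, so the single outgoing dotted edge from a dirty colour cannot satisfy a positive literal. If you kept $L(s,t-1)$ as the contribution of this subcase, the positive-literal slot of the recurrence would have to absorb $L(s,t-1)$, forcing you to prove $L(s,t-1)\leq L(s-1,t)$, which again does not hold. The cleanest statement is: a positive literal contributes at most $L(s-1,t)$ (Cases 1.1, 2.1 give $0$, Case 2.2 gives $L(s-1,t)$), a negative literal contributes at most $\max\bigl(L(s-1,t)+L(s,t-2),\,L(s-2,t)+L(s,t-1)\bigr)$ (Cases 1.2, 2.1 are dominated by the 2.4 expression), and since $C$ has at most two negative literals and $L(s-1,t)\leq L(s-1,t)+L(s,t-2)$, summing over any $|C|\leq 3$ literals gives $L(s-1,t)+2\max(\cdot)$, with the mixed 2.3/2.4 case absorbed by $A+B\leq 2\max(A,B)$ as you already observe. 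With these two repairs your write-up is a correct and somewhat more explicit version of the paper's argument.
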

 \begin{proof}
   The proof works by induction. If $s < 0$ or $t < 0$, then
   clearly there is no assignment $\alpha^*$ with
   $d(\alpha,\alpha^*) \leq s$ and $\cost(\alpha,\alpha^*) \leq t$,
   and the algorithm simply returns failure.\\

   If $s,t \geq 0$, let $C \in F \setminus \Neg(F)$ be a clause that
   $\alpha$ does not satisfy. In the worst case, $|C| = 3$ and
   consists of one positive and two negative literals. There are three
   cases: (i) the negative literals can be both of type $\bar{y}$
   (Case 2.3), (ii) both of type $\bar{z}$ (Case 2.4), or (iii) one of
   type $\bar{y}$ and one of $\bar{z}$.  If (i) holds, then we 
   can bound $L(s,t)$ by
   \begin{eqnarray}
     3 L(s-1,t) + 2L(s,t-2) \ .
   \label{ineq-xyy}
   \end{eqnarray}
   If (ii) holds, we can bound $L(s,t)$ by 
  \begin{eqnarray}
    L(s-1,t) + 2 L(s-2,t) + 2L(s,t-1) \ .
    \label{ineq-xzz} 
  \end{eqnarray} 
  Finally, if (iii) holds, we bound $L(s,t)$ by 
  $$
  L(s-1,t) + (L(s-1,t)+L(s,t-2)) + (L(s-2,t)+L(s,t-1)) \ ,
  $$
  but one easily verifies that this is bounded from above
  by either (\ref{ineq-xyy}) or (\ref{ineq-xzz}).
\end{proof}

\begin{lemma}
  Let $a,b \geq 1$ be such that 
  \begin{eqnarray}
    ab^2 & \geq & b^2 + 2a \label{ineq-ab-xyy} \\
    & \textnormal{and} & \nonumber \\
    a^2b & \geq & ab + 2a^2 + 2b \label{ineq-ab-xzz} \ .
  \end{eqnarray} 
  Then $L(s,t) \in O(a^sb^t)$.
\end{lemma}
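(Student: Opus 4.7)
The plan is to prove $L(s,t) \leq K\, a^s b^t$ by strong induction on $s + t$, where $K = K(a,b)$ is a constant chosen at the end to absorb the boundary behaviour. The recurrence in Proposition~\ref{proposition-L-s-t} refers only to $L(s',t')$ with $s' + t' < s + t$, so the induction is well-founded. For the base cases, observe that each step of the recurrence decreases $s$ and $t$ by at most two, so the only boundary leaves appearing in any recursion tree occur at indices with $s, t \geq -2$. At such leaves $L(s,t) = 1$, and $a^s b^t$ is bounded below by the positive constant $(ab)^{-2}$; any $K \geq (ab)^2$ handles them. A finite further enlargement of $K$ covers the small non-negative $(s,t)$ at which the inductive step would invoke the hypothesis with an out-of-range argument.

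For the inductive step, take $s,t$ large enough that every argument on the right-hand side of the recurrence is non-negative, substitute the inductive hypothesis into Proposition~\ref{proposition-L-s-t}, and factor out $K a^s b^t$:
\begin{equation*}
L(s,t) \;\leq\; K\, a^s b^t \left[\, \frac{1}{a} \;+\; 2 \max\!\left( \frac{1}{a} + \frac{1}{b^2},\; \frac{1}{a^2} + \frac{1}{b} \right) \right].
\end{equation*}
It then suffices to show that the bracketed factor is at most $1$. Splitting on which of the two expressions inside the $\max$ realises the maximum and clearing denominators (by $a b^2$ in the first case and by $a^2 b$ in the second) yields exactly the two conditions (\ref{ineq-ab-xyy}) and (\ref{ineq-ab-xzz}) of the hypothesis, and the induction closes.

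The hard part, such as it is, is entirely clerical: pinning down a single constant $K$ that simultaneously dominates every boundary leaf where $L(s,t)=1$ and every small-index value at which the inductive step is not yet applicable, and then mechanically matching each branch of the $\max$ to the corresponding inequality of the hypothesis. No structural insight beyond the recurrence already established in Proposition~\ref{proposition-L-s-t} is required.
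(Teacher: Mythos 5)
Your proof takes the same route as the paper's: induct on $L(s,t) \leq K a^s b^t$, handle the trivial base cases (where $L = 1$) by choosing $K$ large enough, and close the induction by checking that the bracketed factor is $\leq 1$. Where the paper just asserts ``conditions (\ref{ineq-ab-xyy}) and (\ref{ineq-ab-xzz}) guarantee the induction goes through,'' you display the factored form, which is an improvement in exposition. Your treatment of the base region (noting that any leaf with $s<0$ or $t<0$ has $s,t\geq -2$, whence $a^s b^t \geq (ab)^{-2}$ since $a,b\geq 1$) is correct, if a bit more elaborate than needed.

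However, the central claim---that ``clearing denominators\ldots yields exactly the two conditions (\ref{ineq-ab-xyy}) and (\ref{ineq-ab-xzz})''---is asserted rather than checked, and if you check it, the first branch does not match. From
\begin{equation*}
\frac{1}{a} + 2\left(\frac{1}{a} + \frac{1}{b^2}\right) = \frac{3}{a} + \frac{2}{b^2} \leq 1,
\end{equation*}
multiplying by $ab^2$ gives $3b^2 + 2a \leq ab^2$, not $b^2 + 2a \leq ab^2$ as in (\ref{ineq-ab-xyy}). (The second branch does give exactly (\ref{ineq-ab-xzz}): $ab + 2a^2 + 2b \leq a^2 b$.) This discrepancy traces back to the term $L(s-1,t)$ appearing both outside the $\max$ and inside it in Proposition~\ref{proposition-L-s-t}, contributing a coefficient of $3$. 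The mismatch points to an apparent typo in the paper's lemma statement rather than in your reasoning scheme, and the values $a=(5+\sqrt{57})/2$ and $b=(5+\sqrt{57})^2/(4(8+\sqrt{57}))$ that the paper actually uses do satisfy the corrected inequality $ab^2 \geq 3b^2 + 2a$, so the theorem survives. Still, your proof as written does not close: it hinges on a claimed algebraic identity that is false for the first case, and a careful write-up should either carry out the computation and flag the factor of $3$, or restate the hypothesis with the corrected inequality.
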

\begin{proof}
  We use induction to show that $L(s,t) \leq Ca^sb^t$ for some
  sufficiently large constant $C$. For the base case, i.e., $s<0$ or
  $t<0$, choose a constant $C$ large enough that $1=L(s,t) \leq
  Ca^sb^t$. If $s,t \geq 0$, then conditions (\ref{ineq-ab-xyy}) and
  (\ref{ineq-ab-xzz}) guarantee that the induction goes through when
  one uses the bound from Proposition~\ref{proposition-L-s-t}.
\end{proof}

\subsection*{Combining $\dbs$ with Covering Codes}

Our overall algorithm works follows: If $\Neg(F)$ consists of at most
$r$ pairwise disjoint negative $3$-clauses, it constructs a covering
code $\mathcal{C}$ of radius $s$ for the set of exact assignments. In
other words, $\mathcal{C}$ is such that for every exact assignment
$\beta$, there is some exact assignment $\alpha \in C$ with
$d(\alpha,\beta) \leq s$. Here, $s$ is some natural number to be
determined later. It then calls
$$
\dbs (F, \alpha, s, m) 
$$
for each $\alpha \in \mathcal{C}$, where $m := r-|\Neg(F)|$ is our
``surplus budget''. If no run of $\dbs$ finds a satisfying assignment,
it concludes that $B_r(1,\dots,1)$ contains no satisfying assignment,
and returns failure.  The overall running time of this business is
\begin{eqnarray}
|\mathcal{C}| a^s b^{r-m} \poly(n) \ .
\label{ineq-running-time-code}
\end{eqnarray}
The following lemma is from Scheder~\cite{Scheder-CSP}, adapted
to our current terminology.
\begin{lemma}[\cite{Scheder-CSP}]
  For every $x > 0$, there is some $s \in \{0,1,\dots,2m\}$ and a
  covering code $\mathcal{C}$ of radius $s$ for the set of exact
  assignments of size
  $$
  |\mathcal{C}| \leq \frac{3^m x^s}{(1+x+x^2)^m} \poly(m) \ .
  $$
  Furthermore, one can deterministically construct 
  $\mathcal{C}$ in time $O(|\mathcal{C}|)$.
\end{lemma}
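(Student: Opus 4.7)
The plan is to combine a generating-function volume bound on horizontal balls with a greedy (or block-greedy) covering construction.

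\textbf{Volume lower bound.} Restricted to the three exact colors $\{011,101,110\}$, the solid-edge subgraph of Figure~\ref{figure-seven} is a directed 3-cycle (as implicit in Cases 2.2 and 2.4: from $011$, one solid step reaches $101$ and two solid steps are needed to reach $110$). Consequently, the number $n_k$ of exact assignments at horizontal distance exactly $k$ from a fixed exact $\beta$ equals $[z^k](1+z+z^2)^m$ and is independent of $\beta$. For any given $x > 0$, the identity
$$(1+x+x^2)^m \ = \ \sum_{k=0}^{2m} n_k x^k$$
combined with pigeonhole over the $2m+1$ terms on the right produces some $s \in \{0,1,\dots,2m\}$ with
$$|B_s(\beta)| \ \geq \ n_s \ \geq \ \frac{(1+x+x^2)^m}{(2m+1)\, x^s}.$$

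\textbf{From volume to a covering code.} Because this lower bound is uniform in $\beta$, a standard greedy set-cover argument (equivalently, the probabilistic method followed by derandomization via conditional expectations) yields a covering code $\mathcal{C}$ of radius $s$ for the set of exact assignments with
$$|\mathcal{C}| \ \leq \ \bigl\lceil (3^m / |B_s(\beta)|) \cdot \ln(3^m) \bigr\rceil \ \leq \ \frac{3^m x^s}{(1+x+x^2)^m}\,\poly(m),$$
as required for the size bound.

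\textbf{Deterministic efficient construction and main obstacle.} Running greedy directly on the $3^m$-point space would dominate the overall running time, so the plan is to chop the $m$ clauses into $\Theta(m/\log m)$ blocks of size $m_0 = \Theta(\log m)$, brute-force a near-optimal covering on each block in time $3^{m_0} = \poly(m)$ (via the same pigeonhole applied locally, with a per-block radius $s_i$), and take the Cartesian product of block codes, so that horizontal distances add across blocks. The main obstacle is orchestration: the per-block pigeonhole choices of $s_i$ need not be consistent, so one first fixes the target $s$ globally via the volume bound and then distributes it across blocks as radii $s_i$ with $\sum_i s_i = s$, invoking the local volume bound for each block. A second care is that the per-block $\poly(m_0)$ losses multiply to $\poly(m_0)^{m/m_0}$, and remain $\poly(m)$ only because greedy covering of balls of equal size incurs only an $O(m_0 \log 3)$ overhead; keeping this multiplicative overhead polynomial is the delicate point of the argument.
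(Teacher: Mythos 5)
The paper does not prove this lemma; it cites it verbatim from Scheder~\cite{Scheder-CSP}, so there is no in-paper argument to compare against, and your attempt must stand on its own. Your first two steps are sound. The solid edges on the exact colors do form a directed $3$-cycle (Case 2.2 gives $d(011,101)=1$, Case 2.4 gives $d(011,110)=2$, and the setup is cyclically symmetric), so each exact color sees exactly one exact color at each of the horizontal distances $0,1,2$, the number of exact assignments at distance exactly $k$ from any fixed exact $\beta$ is $[z^k](1+z+z^2)^m$, and pigeonhole over the $2m+1$ coefficients yields the claimed volume lower bound for some $s$. The passage from a uniform volume bound to an existential covering of that size via greedy set cover (or random centers plus union bound) is also correct — note that although $d$ is a \emph{directed} distance, in-balls and out-balls of exact assignments have the same cardinality by the cyclic symmetry, which is exactly what the averaging step of greedy needs.

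The gap is in the final accounting, and your last sentence is internally inconsistent. The per-block greedy bound you invoke gives a code of size at most $(3^{m_0}/V)\cdot(m_0\ln 3 + O(1))$, i.e.\ a multiplicative overhead of $\Theta(m_0)$ per block, and you describe it as such. But across $m/m_0$ blocks this overhead compounds to $\Theta(m_0)^{m/m_0}$; with $m_0=\Theta(\log m)$ that is $2^{\Theta(m\log\log m/\log m)}$, which is subexponential in $m$ but far from $\poly(m)$. Even a constant per-block overhead $c>1$ would compound to $c^{m/\log m}=2^{\Theta(m/\log m)}$, still superpolynomial; and choosing larger blocks $m_0 = m^{\delta}$ makes the per-block brute force $3^{m_0}$ superpolynomial instead. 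So the assertion that the product ``remain[s] $\poly(m)$ only because greedy \dots incurs only an $O(m_0\log 3)$ overhead'' does not follow: the $O(m_0)$ overhead you correctly identify is precisely what prevents the compounded bound from being polynomial under blocking. You flag this as ``the delicate point'' but do not resolve it; as written, the argument only yields $|\mathcal{C}|\le \frac{3^m x^s}{(1+x+x^2)^m}\cdot 2^{o(m)}$, which is weaker than the stated lemma (though arguably still sufficient for the paper's downstream $O^*$ bounds, at the cost of an arbitrarily small loss in the base). Establishing the genuinely polynomial overhead claimed by the lemma requires an idea beyond naive greedy-on-$\Theta(\log m)$-blocks, since any fixed per-block multiplicative loss raised to the number of blocks is superpolynomial.
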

Combining (\ref{ineq-running-time-code}) with the lemma and
setting $x := 1/a$ in the lemma, we see that the running 
time of the algorithm is at most 
\begin{eqnarray}
\frac{3^m x^s a^s}{(1+x+x^2)^m} b^{r-m} \poly(n)
= \left(\frac{3a^2}{a^2+a+1}\right)^m b^{r-m} \poly(n) \ .
\label{running-time-m}
\end{eqnarray}
We still can choose $a$ and $b$, as long as they satisfy
(\ref{ineq-ab-xyy}) and (\ref{ineq-ab-xzz}). We try the following: We
guess (with hindsight, we know) that (\ref{ineq-xzz}) dominates the
running time, thus we try to satisfy (\ref{ineq-ab-xzz}) with
equality. Furthermore, we want to get rid of the parameter $m$ in
(\ref{running-time-m}), which depends on the formula $F$ and over
which we do not have control. In other words, we want to choose $a$
and $b$ such that
\begin{eqnarray*}
  a^2b & = & ab + a^2 + 2b \\
  b & = & \frac{3a^2}{a^2+a+1}\ .
\end{eqnarray*}
One checks that $a = (5+\sqrt{57})/2$ and $b = (5+\sqrt{57})^2 / (4
(8+\sqrt{57}))$ will do, and also satisfy (\ref{ineq-ab-xyy}). With
these numbers, (\ref{running-time-m}) boils down to $O^*(b^r) \leq
O^*(2.533^r)$. This finishes the proof of Lemma~\ref{main-lemma}.\\

We observe that our algorithm solves the case where $\Neg(F)$ consists
of pairwise disjoint negative $3$-clauses more efficiently, in
$O^*(2.533^r)$, than the case where $F$ contains overlapping
negative clauses, in which there is a branching leading to a running
time of $O^*(2.562^r)$. This is qualitatively different from
all previous approaches to improving local search (Dantsin et
al.~\cite{dantsin}, Brueggeman and Kern~\cite{BK04} and
Scheder~\cite{Scheder08}): In those approaches, pairwise disjoint
negative clauses constitute the worst case, and the case where $F$
contains intersecting clauses is always the easy case handled at the
very beginning. Now the picture has changed: A further improvement
will have to work on that case, too.\\

We are ready to prove Theorem~\ref{main-theorem}.
\begin{proof}[Proof of Theorem~\ref{main-theorem}]
  The algorithm outlined above solves $\BkS{3}$ in time $O^*(a^r)$, 
  where $a \approx 2.562$, with the worst case being that
  $F$ contains negative clauses that intersect in one literal.
  Combining this with a standard construction of covering codes,
  we conclude that $3$-SAT can be solved in time
  $$
  \left(\frac{2a}{a+1}\right)^n \poly(n) \leq O^*\left(1.439^n\right) \ ,
  $$  
  which finishes the proof.
\end{proof}

\bibliographystyle{abbrv}
\bibliography{refs}

\begin{thebibliography}{1}

\bibitem{BK04}
T.~Brueggemann and W.~Kern.
\newblock An improved deterministic local search algorithm for 3-sat.
\newblock {\em Theor. Comput. Sci.}, 329(1-3):303--313, 2004.

\bibitem{dantsin}
E.~Dantsin, A.~Goerdt, E.~A. Hirsch, R.~Kannan, J.~Kleinberg, C.~Papadimitriou,
  O.~Raghavan, and U.~{Sch\"oning}.
\newblock A deterministic {$(2-2/(k+1))^n$} algorithm for {$k$}-{SAT} based on
  local search.
\newblock In {\em Theoretical Computer Science 289}, pages 69--83, 2002.

\bibitem{ppsz}
R.~Paturi, P.~Pudl\'{a}k, M.~E. Saks, and F.~Zane.
\newblock An improved exponential-time algorithm for k-{SAT}.
\newblock {\em J. ACM}, 52(3):337--364, 2005.

\bibitem{ppz}
R.~Paturi, P.~Pudl{\'a}k, and F.~Zane.
\newblock Satisfiability coding lemma.
\newblock {\em Chicago J. Theoret. Comput. Sci.}, pages Article 11, 19 pp.
  (electronic), 1999.

\bibitem{rolf05}
D.~Rolf.
\newblock Improved bound for the {PPSZ}/{S}ch{\"o}ning-algorithm for 3-{SAT}.
\newblock {\em Electronic Colloquium on Computational Complexity (ECCC)},
  (159), 2005.

\bibitem{Scheder08}
D.~Scheder.
\newblock Guided search and a faster deterministic algorithm for 3-{SAT}.
\newblock In {\em {Proc.~of the 8th Latin American Symposium on Theoretical
  Informatics (LATIN'08), Lecture Notes In Computer Science, Vol.~4957}}, pages
  60--71, 2008.

\bibitem{Scheder-CSP}
D.~Scheder.
\newblock Using a skewed hamming distance to speed up deterministic local
  search.
\newblock {\em CoRR}, abs/1005.4874, 2010.

\bibitem{Schoening99}
U.~Sch\"{o}ning.
\newblock A probabilistic algorithm for $k$-{SAT} and constraint satisfaction
  problems.
\newblock In {\em FOCS '99: Proceedings of the 40th Annual Symposium on
  Foundations of Computer Science}, page 410, Washington, DC, USA, 1999. IEEE
  Computer Society.

\end{thebibliography}

\end{document}